%
\documentclass[runningheads]{llncs}
\usepackage{amssymb}
\usepackage{amsmath,bm}
\usepackage{graphicx}
\usepackage{multirow}
\usepackage{array}
\usepackage{algorithm}
\usepackage{algorithmic}
\usepackage{subfig}
\usepackage{float}
\usepackage{diagbox}

\newcommand{\PreserveBackslash}[1]{\let\temp=\\#1\let\\=\temp}

\newcolumntype{C}[1]{>{\PreserveBackslash\centering}p{#1}}
%

\begin{document}
\title{The Complexity of the Partition Coloring Problem
}
%
%
\author{Zhenyu Guo\inst{1} \and
Mingyu Xiao\inst{2} \and
Yi Zhou\inst{3}}
%
%
\institute{University of Electronic Science and Technology of China \\
\email{Harry.Guo@outlook.com} \and
University of Electronic Science and Technology of China\\
\email{myxiao@gmail.com} \and
University of Electronic Science and Technology of China\\
\email{zhou.yi@uestc.edu.cn}}
\maketitle              
\begin{abstract}
    Given a simple undirected graph $G=(V,E)$ and a partition of the vertex set $V$ into $p$ parts, the \textsc{Partition Coloring Problem} asks if we can select one vertex from each part of the partition such that the chromatic number of the subgraph induced on the $p$ selected vertices is bounded by $k$.
  PCP is a generalized problem of the classical \textsc{Vertex Coloring Problem} and has applications in many areas, such as scheduling and encoding etc.
  In this paper, we show the complexity status of the \textsc{Partition Coloring Problem} with three parameters:
  the number of colors, the number of parts of the partition, and the maximum size of each part of the partition.
  Furthermore, we give a new exact algorithm for this problem.
    \keywords{Graph coloring \and Partition coloring \and NP-Completeness}
    \end{abstract}

    \section{Introduction}
    Given a simple undirected graph $G=(V,E)$, the vertex coloring is to assign each vertex a color such that no two adjacent vertices have the same color. In the \textsc{Vertex Coloring Problem} (VCP), a graph $G$ together with an integer $k$ are given, and the goal is to decide whether $G$ can be colored by using at most $k$ colors~\cite{karp1972reducibility}. VCP is an important problem in both graph theory and practice~\cite{galinier1999hybrid,gamst1986some,glass2003genetic}.

  In this paper, we study a generalized version of VCP, the \textsc{Partition Coloring Problem (PCP)},  which is also called the \textsc{Selective Graph Coloring Problem} in
  some references~\cite{demange2015some}.
  In PCP, we are given a graph $G=(V,E)$ with a partition $\mathcal{V}$ of the vertex set and an integer $k$, where
  $\mathcal{V}=\{V_1, V_2, \cdots, V_p\}$, $V_i\cap V_j = \emptyset$  for all $1\le i,j\le p$, and $\bigcup_{1\le i\le p} V_i = V$. The problem asks whether
  there is an induced subgraph containing exactly one vertex from each part $V_i$ of the partition $\mathcal{V}$ that is colorable by using $k$ colors.
  Note that when each part $V_i$ ($1\le i \le p$) is a singleton, i.e., $|V_i|=1$, PCP is equal to VCP. So VCP is a special case of PCP.
  Indeed, PCP, together with other extended coloring problems, such as the \textsc{Coloring Sum Problem}~\cite{kubicka1989introduction}, the \textsc{Edge Coloring Problem}~\cite{holyer1981np}, the \textsc{Mixed Graph Coloring Problem}~\cite{hansen1997mixed}, the \textsc{Split Coloring Problem}~\cite{ekim2005split} and so on, have been intensively studied from the view of computational complexity in the last decades~\cite{jin2017algorithms,zhou2004algorithm,Damaschke2019,lucarelli2010max}.

  \subsection{Existing literature}
  The literature dealing with PCP is rich and diverse. In terms of applications, PCP was firstly introduced in~\cite{li2000partition} to solve the wavelength routine and assignment problem, which is to assign a limited number of bandwidths on fiber networks. PCP also finds applications in a wide range in dichotomy-based constraint encoding, antenna positioning and frequency assignment and scheduling~\cite{demange2015some}.

  PCP is NP-Complete in general since the well-known NP-Complete problem, VCP is a special case of PCP. In pursuit of fast solution methods for PCP, heuristic searches without guarantee of the optimality represent one of the most popular approaches \cite{li2000partition,pop2013memetic,furini2018exact}. Exact algorithms based on integer linear programming were also investigated for solving problems of small scale~\cite{furini2018exact,frota2010branch,hoshino2011branch}.
  In terms of computational complexity, the NP-Hardness of this problem on special graph classes, such as  paths, circles, bipartite graphs, threshold graphs and split graphs are
  studied~\cite{demange2014complexity,demange2015some}.

  \subsection{Our contributions}
  In this paper, we further study the computational complexity of PCP.
  We always use $k$ to denote the number of colors, $p$ to denote the number of parts in the partition $\mathcal{V}$, and $q$ to denote the upper bound of the size  of all parts in the partition $\mathcal{V}$.
  We give some boundaries between P and NPC for this problem with different constant settings on the three parameters.
  We also consider the parameterized complexity of PCP: we show that PCP parameterized by $p$ is W[1]-hard and PCP parameterized by both $p$ and $q$ is FPT. The main complexity results  are summarized in Table~\ref{t1} and Table~\ref{t2}. In addition, we give a fast exact algorithm for PCP, which is based on subset convolution and runs in $O((\frac{n+p}{p})^pn\log k)$ time. Note that when $p=n$, PCP becomes VCP and the running time bound becomes $O(2^nn\log k)$, the best known running time bound for VCP.

  \begin{table}[]
   \caption{Complexity results with different constants $q$ and $k$}\label{t1}
   \centering
  \begin{tabular}{|c|c|c|c|}
  \hline
  \diagbox{$q$}{$k$}        & $k=1$                                                     & $k=2$                                                   & $k\ge 3$                                                  \\ \hline
  $q=1$    & \begin{tabular}[c]{@{}c@{}}P\\ (Theorem 2)\end{tabular}   & \begin{tabular}[c]{@{}c@{}}P\\ (Theorem 2)\end{tabular} & \begin{tabular}[c]{@{}c@{}}NPC\\ (Theorem 2)\end{tabular} \\ \hline
  $q=2$    & \begin{tabular}[c]{@{}c@{}}P\\ (Theorem 3)\end{tabular}   & \multicolumn{2}{c|}{\multirow{2}{*}{\begin{tabular}[c]{@{}c@{}}NPC\\ (Corollary 1)\end{tabular}}}                   \\ \cline{1-2}
  $q\ge 3$ & \begin{tabular}[c]{@{}c@{}}NPC\\ (Theorem 5)\end{tabular} & \multicolumn{2}{c|}{}                                                                                               \\ \hline
  \end{tabular}
  \end{table}

  \begin{table}[]
  \caption{Complexity results with parameters $p$ and $q$}\label{t2}
  \centering
  \begin{tabular}{|c|c|}
  \hline
  $p$ is constant              & \begin{tabular}[c]{@{}c@{}}P\\ (Theorem \ref{thm_pconstant})\end{tabular} \\ \hline
  Parameterized by $p$         & \begin{tabular}[c]{@{}c@{}}W{[}1{]}-hard\\ (Theorem \ref{p_w1})\end{tabular}        \\ \hline
  Parameterized by $p$ and $q$ & \begin{tabular}[c]{@{}c@{}}FPT\\ (Theorem \ref{p_FPT})\end{tabular}                  \\ \hline
  \end{tabular}
  \end{table}

  \section{Preliminaries}
  Let $G=(V,E)$ stand a simple and undirected graph with $n=|V|$ vertices and $m=|E|$ edges.
  For a vertex subset $X\subseteq V$, we use $G[X]$ to denote the subgraph induced by $X$.
  For a vertex $v\in V$, the set of vertices adjacent to $v$ is called the set of \emph{neighbors} of $v$ and denoted by $N(v)$.
  A graph is called a \emph{clique} if there is an edge between any pair of vertices in the graph and a graph is called an \emph{independent set} if there is no edge between any pair of vertices.

  For a nonnegative integer $k$, a $k$-coloring in a graph $G=(V,E)$ is a function $c:V\rightarrow \{1,2,\cdots,k\}$ such that
  for any edge $vu$ it holds that $c(v)\neq c(u)$. A graph is \emph{$k$-colorable} if it allows a $k$-coloring.
  The \textsc{Vertex Coloring Problem (VCP)} is to determine whether a given graph is  $k$-colorable.
  The smallest integer $k$ to make $G$ $k$-colorable is called the \emph{chromatic number} of $G$ and denoted by $\chi(G)$.

  Given an integer $p$, a \emph{$p$-partition} of the vertex set $V$ of $G$ is denoted by $\mathcal{V} = \{V_1, V_2, \cdots, V_p\}$, where $V_i\cap V_j = \emptyset$ for any pair of different $i$ and $j$ in $\{1,2,\dots,p\}$ and $\bigcup_{1\le i\le p} V_i = V$.
  Each subset of a $p$-partition is also called a \emph{part}. The maximum size of the parts in a $p$-partition
  $\mathcal{V} = \{V_1, V_2, \cdots, V_p\}$ is denoted by $q$, i.e., $q = \max_{i=1}^p |V_i|$.
  A \emph{selection} of a $p$-partition $\mathcal{V}$ 
  is a subset of vertex $S\subseteq V$ such that $|S\cap V_i| = 1$ for any $i\in \{1,2,\dots,p\}$.
  The \textsc{Partition Coloring Problem} is formally defined as follows.

  \noindent\rule{\linewidth}{0.3mm}
  \textbf{The Partition Coloring Problem (PCP)}\\
  \textbf{Input:} a graph $G=(V,E)$, a $p$-partition $\mathcal{V} = \{V_1, V_2, \cdots, V_p\}$ of $V$, and an integer $k$;\\
  \textbf{Question:} Is there a selection $S$ of $\mathcal{V}$ such that the chromatic number of the induced graph $G[S]$ is at most $k$, i.e., $\chi(G[S]) \le k$?

  \noindent\rule{\linewidth}{0.3mm}

  We also introduce two known hard problems here, which will be used to prove the hardness results of our problems.

  A \emph{Conjunctive Normal Formula} (CNF) $\phi$ is a conjunction of $m$ given \emph{clauses} $C_1, C_2, \cdots, C_m$ on $n$ boolean variables, where each clause $C_i$ is a disjunction of literals or a single literal and a literal is either a variable or the negation of a variable.
  A literal $x_i$ and its negation $\overline{x}_i$ are called a pair of \emph{contrary literals}.
  A \emph{truth assignment} to $\phi$ is an assignment of the $n$ variables such that every clause in $\phi$ is true.
  The \textsc{$k$-Satisfiability Problem} is defined as follows:

  \noindent\rule{\linewidth}{0.3mm}
  \textbf{The \textsc{$k$-Satisfiability Problem}($k$-SAT)}\\
  \textbf{Input:} A CNF $\phi$  of $m$ given \emph{clauses} $C_1, C_2, \cdots, C_m$ on $n$ boolean variables $x_1, x_2,\cdots,x_n$,
  where each clause contains at most $k$ literals. \\
  \textbf{Question:} Is there a truth assignment to $\phi$?

  \noindent\rule{\linewidth}{0.3mm}

  The $k$-SAT is NP-Complete for each fixed integer $k\ge 3$~\cite{karp1972reducibility}, but polynomially solvable for $k= 1$ or $2$~\cite{krom1967decision}. These results will be used in the proofs of our Theorems \ref{thm_q2_k1}, \ref{thm_q2_k2} and \ref{thm_q3_k1}.

  The \textsc{Independent Set Problem} is another famous problem, which is defined as follows:

  \noindent\rule{\linewidth}{0.3mm}
  \textbf{The \textsc{Independent Set Problem}}\\
  \textbf{Input:} a graph $G=(V,E)$, an integer $k$;\\
  \textbf{Question:} Is there an independent set of size at least $k$ in $G$?

  \noindent\rule{\linewidth}{0.3mm}

  The \textsc{Independent Set Problem} is polynomially solvable when $k$ is a constant and NP-Complete when $k$
  is part of the input~\cite{karp1972reducibility}. Downey and Fellows~\cite{downey1995fixed} further showed that
  the \textsc{Independent Set Problem} is W[1]-hard when taking $k$ as the parameter.
  This W[1]-hardness result implies that the \textsc{Independent Set Problem} will not allow an algorithm with running time
  $f(k) poly(n)$ for any computable function $f(k)$ and polynomial function on the input size $poly(n)$ under the assumption $FPT\neq W[1]$. For more background about parameterized complexity, readers are referred to the monograph~\cite{downey1995fixed}.
  The hardness results of the \textsc{Independent Set Problem} will be used to prove the hardness of PCP,
  say Theorem \ref{p_w1}.

  \section{Complexity of PCP}
  In general, PCP is known to be NP-hard since it contains the well known NP-hard problem, the \textsc{Vertex Coloring Problem}
  as a special case, where each part contains exactly one vertex.
  In this paper, we will consider the complexity of PCP with  respect to the following three parameters:
  \begin{itemize}
    \item the number of colors, $k$;
    \item the number of parts in the partition, $p$;
    \item the maximum cardinality among all parts in the partition, $q$.
  \end{itemize}

  We will show that PCP is polynomially solvable only when some of the three parameters are small constants.
  First of all, it is trivially to see that PCP is in NP for any setting of the three parameters.
  Given an assignment of colors to a subset of vertices, we can easily check whether it is a selection and a feasible $k$-coloring in polynomial time.

  \begin{theorem}\label{order}
  PCP is in NP.
  \end{theorem}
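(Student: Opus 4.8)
The plan is to exhibit a polynomial-time verifier for PCP, which immediately places the problem in NP. The certificate will be an assignment $c: V \to \{0, 1, 2, \dots, k\}$ of colors to the vertices of $G$, where the color $0$ is a distinguished symbol meaning ``not selected''; alternatively, one can split the certificate into a vertex subset $S \subseteq V$ together with a map $c: S \to \{1, \dots, k\}$. Either way the certificate has size polynomial (indeed linear) in the size of the input, so the only thing to check is that verification runs in polynomial time.

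The verification proceeds in two steps. First, I would check that the set $S$ of vertices receiving a nonzero color is a valid selection of the partition $\mathcal{V}$, i.e. that $|S \cap V_i| = 1$ for every $i \in \{1, 2, \dots, p\}$. Since the parts $V_1, \dots, V_p$ are disjoint and cover $V$, this requires only a single pass over $V$ maintaining a counter for each part, which is $O(n + p)$ time. Second, I would check that $c$ restricted to $S$ is a proper $k$-coloring of $G[S]$: iterate over every edge $uv \in E$ with both endpoints in $S$ and confirm $c(u) \neq c(v)$. This is $O(n + m)$ time. If both checks pass, accept; otherwise reject.

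For correctness, note that $G[S]$ admits a proper coloring with at most $k$ colors if and only if there exists such a certificate that passes both checks: given a selection $S$ with $\chi(G[S]) \le k$, any proper $k$-coloring of $G[S]$ extended by $0$ on $V \setminus S$ is an accepted certificate; conversely any accepted certificate directly witnesses a selection $S$ together with a proper coloring of $G[S]$ using colors from $\{1, \dots, k\}$, so $\chi(G[S]) \le k$. Thus the verifier accepts some certificate precisely when the PCP instance is a yes-instance.

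There is essentially no obstacle here — the statement is routine, and the only mild point worth stating explicitly is that this argument is uniform in the parameters $k$, $p$, and $q$: the verifier above makes no assumption on their sizes, so PCP lies in NP for every setting of the three parameters, as claimed in the surrounding discussion. I would simply remark that, since a proper coloring of a graph on $t$ vertices never needs more than $t$ colors, we may assume $k \le p$ without loss of generality, which keeps the certificate size polynomial even when $k$ is given in binary.
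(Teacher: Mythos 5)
Your verifier argument is correct and is essentially the same as the paper's (the paper justifies Theorem~1 in one sentence: given an assignment of colors to a subset of vertices, one checks in polynomial time that it is a selection and a feasible $k$-coloring). You simply spell out the certificate and the two checks in more detail, which is fine.
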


  \subsection{Parameters $q$ and $k$}
  We now discuss the NP-Hardness of PCP
  with different constant values of $q$ and $k$.
  As mentioned above, when $q=1$, the problem is equal to VCP, which is NP-Complete for each constant $k\ge 3$ and polynomially solvable for each $k\le 2$~\cite{karp1972reducibility}. Therefore, we have the following conclusion.

  \begin{theorem}\label{thm_q1_k1_k2_k3}
  When $q = 1$, PCP is polynomially solvable for each constant $1\leq k \le 2$ and NP-Complete for each constant $k\ge 3$.
  \end{theorem}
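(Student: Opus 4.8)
The plan is to observe that the hypothesis $q=1$ collapses the selection step entirely. When $q = 1$, every part $V_i$ satisfies $|V_i| = 1$, so the only subset $S \subseteq V$ with $|S \cap V_i| = 1$ for all $i$ is $S = V$ itself. Hence there is a unique selection, namely the whole vertex set, and the question ``is there a selection $S$ with $\chi(G[S]) \le k$?'' becomes simply ``is $\chi(G) \le k$?'', which is exactly the \textsc{Vertex Coloring Problem} on $G$. So the proof reduces to invoking the known complexity of VCP for the relevant values of $k$, together with Theorem~\ref{order} for membership in NP.

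First I would record the two polynomial cases. For $k = 1$: $G$ admits a $1$-coloring if and only if $E = \emptyset$, which is checkable in $O(n+m)$ time, so the unique selection $S = V$ yields a yes-instance iff $G$ is edgeless. For $k = 2$: $G$ admits a $2$-coloring if and only if $G$ is bipartite, which is decidable in linear time by a standard breadth-first search that attempts to $2$-color each connected component. In both cases PCP restricted to $q = 1$ is therefore solvable in polynomial time.

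Next I would handle $k \ge 3$. Membership in NP is immediate from Theorem~\ref{order}, since that result holds for every setting of the parameters. For hardness, I would note that the map $G \mapsto (G, \mathcal{V}, k)$ with $\mathcal{V}$ the partition of $V$ into singletons is a trivial polynomial-time reduction from $k$-colorability to PCP with $q = 1$: by the observation above, $(G, \mathcal{V}, k)$ is a yes-instance of PCP exactly when $G$ is $k$-colorable. Since $k$-colorability is NP-complete for each fixed $k \ge 3$~\cite{karp1972reducibility}, PCP with $q = 1$ is NP-complete for each such $k$.

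There is essentially no technical obstacle here; the only thing to get right is the routine verification that the singleton partition forces $S = V$, after which everything follows from the cited results for VCP. I would present the statement as an immediate consequence of this collapse rather than as an argument requiring new ideas.
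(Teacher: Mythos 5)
Your proposal is correct and follows essentially the same route as the paper: when $q=1$ the unique selection is $S=V$, so PCP collapses to VCP, and the theorem follows from the known polynomial solvability of $k$-coloring for $k\le 2$ and NP-completeness for fixed $k\ge 3$, with NP membership from Theorem~\ref{order}. Your write-up merely spells out the $k=1$ and $k=2$ checks explicitly, which the paper leaves implicit.
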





  Next, we consider the cases where $q\geq 2$.

  \begin{theorem}\label{thm_q2_k1}
  When $q=2$ and $k=1$, PCP is polynomially solvable.
  \end{theorem}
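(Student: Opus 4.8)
The plan is to reduce PCP restricted to $q=2$ and $k=1$ to the \textsc{$2$-Satisfiability Problem}, which is polynomially solvable~\cite{krom1967decision}. The starting observation is that a selection $S$ of the partition $\mathcal{V}$ satisfies $\chi(G[S]) \le 1$ if and only if $G[S]$ has no edge, i.e., $S$ is an independent set of $G$. Since every part has at most two vertices, choosing a vertex from a part is essentially a binary decision, and every edge of $G$ becomes a ``not both'' constraint on these decisions, which is precisely the shape of a $2$-SAT clause.

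Concretely, I would first handle singleton parts separately: if $|V_i|=1$, its unique vertex is forced into $S$. For each part $V_i=\{a_i,b_i\}$ of size two, introduce a boolean variable $x_i$, with the convention that $x_i$ true means $a_i$ is selected and $x_i$ false means $b_i$ is selected. For every edge $uv\in E$ whose two endpoints lie in different parts, add a clause forbidding the simultaneous selection of $u$ and $v$: for instance, if $u=a_i$ and $v=b_j$ the clause is $(\overline{x}_i \vee x_j)$, if $u=a_i$ and $v=a_j$ it is $(\overline{x}_i \vee \overline{x}_j)$, and so on; if exactly one endpoint lies in a singleton part the corresponding literal is dropped and the clause becomes a unit clause; if both endpoints lie in singleton parts, the clause is empty and we may immediately output ``no''. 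Edges with both endpoints inside the same part are simply ignored, since at most one vertex per part is ever selected. This produces a formula with at most $m$ clauses, each of at most two literals, constructed in polynomial time.

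It then remains to verify the equivalence. Truth assignments of the constructed formula correspond bijectively to selections $S$ of $\mathcal{V}$ (with the forced choices on singletons), and $G[S]$ is edgeless exactly when no forbidden pair of vertices is jointly selected, i.e., exactly when every clause is satisfied. Hence the PCP instance is a yes-instance if and only if the $2$-SAT instance is satisfiable. Since $2$-SAT is solvable in polynomial time, PCP with $q=2$ and $k=1$ is too, which together with Theorem~\ref{order} gives the claim.

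I do not expect a deep obstacle here; the only delicate part is the bookkeeping around singleton parts — making sure that forced vertices turn edge-constraints into unit clauses or, in the worst case, into the empty clause — and translating each inter-part edge into the clause with the correct polarities according to which vertex of each part is its endpoint. These are routine case distinctions rather than a genuine difficulty.
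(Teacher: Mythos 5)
Your proposal is correct and follows essentially the same route as the paper: one boolean variable per part encoding the binary choice within that part, and a ``not both'' $2$-clause for each edge, so that satisfying assignments of the $2$-SAT formula correspond exactly to independent selections. The only cosmetic difference is your treatment of singleton parts (forcing them directly and producing unit or empty clauses) where the paper instead keeps unit vertex clauses and literals for every vertex; this does not change the argument.
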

  \begin{proof}
  We show that the case that $k=1$ and $q=2$ can be polynomially reduced to the polynomially solvable problem 2-SAT~\cite{krom1967decision}.

  For an instance of PCP with $q=2$ and $k=1$, a graph $G=(V,E)$ and a $p$-partition $\mathcal{V}$ of $V$ where each part has at most 2 vertices, we construct
  an instance $\phi$ of 2-SAT on $p$ variables.

  For each part $V_j$ in $\mathcal{V}$, we associate it with a variable $x_j$. Then we have $p$ variables in $\phi$.
  Furthermore, we associate each vertex in $G$ with a literal (either a variable $x$ or its negative $\overline{x}$):
  for each part of size 2, say $V_j=\{u_{j1}, u_{j2}\}$, we associate vertex $u_{j1}$ with literal $\ell_{j1}=x_j$ and associate vertex $u_{j2}$ with literal $\ell_{j2}=\overline{x}_j$;
  for each part of size 1, say $V_j=\{u_{j1}\}$, we associate vertex $u_{j1}$ with literal $\ell_{j1}=x_j$.
  Next, we construct clauses. We will have $p$ \emph{vertex clauses} and $|E|$ \emph{edge clauses}.
  For each part $V_j$ of size 1, we construct a vertex clause $\ell_{j1}$ containing exactly one literal;
  for each part $V_j$ of size 2, we construct a vertex clause $\ell_{j1}\vee \ell_{j2}$ of size 2. We can see that the second kind of vertex clause will always be true since $\ell_{j2}=
  \overline{\ell_{j1}}$. However, we keep them for the purpose of presentation.
  For each edge $(u,v)\in E$, we construct an edge clause $\overline{\ell_u} \vee \overline{\ell_v}$ of size 2, where $u$ is associated with the literal $\ell_u$ and
  $v$ is associated with the literal $\ell_v$. Thus, we have $|E|$ edge clauses.

  We prove that there is a selection of $\mathcal{V}$ which is 1-colorable if and only if $\phi$ is satisfied.

  \textbf{The ``$\Rightarrow$'' part:}
  Assume there is a selection $S$ of $\mathcal{V}$ which is 1-colorable. Then $S$ will form an independent set. For each vertex $v$ in $S$, we let its associated literal $\ell_v$ be 1.
  For any variable left without assigning a value, we simply let it be 1.
  We claim that this is a truth assignment to $\phi$. Since each vertex is associated with a different literal, we know the above assignment of letting the literals associated to vertices in $S$ is feasible. First of all, we know each edge clause is satisfied since each part contains at least one vertex in $S$ and then each edge clause contains at least one literal with value 1. For an edge clause $\overline{\ell_u} \vee \overline{\ell_v}$ corresponding to the edge $(u,v)$, if it is not satisfied, then $\ell_u=\ell_v=1$ and thus both of $u$ and $v$ are in $S$, which is a contradiction to the fact that $S$ is an independent set. So all edge clauses are satisfied and $\phi$ is satisfied.

  \textbf{The ``$\Leftarrow$'' part:}
  Assume that $\phi$ is satisfied. For a truth assignment $A$ of $\phi$, we select a vertex $v$ into the selection $S$ if and only if its associated literal is assigned 1.
  Then the set $S$ is a 1-colorable selection. The reason is as follows.
  Each vertex clause can have at most one literal of value 1 in $A$. So each part has one vertex being selected into $S$.
  For any two vertices $u,v \in S$, if there is an edge between them, then there is an edge clause $\overline{\ell_u} \vee \overline{\ell_v}$. Since $\overline{\ell_u} \vee \overline{\ell_v}$ should be 1 in $A$, we know that at least one of $\ell_u$ and $\ell_v$ is 0 and then at least one of $u$ and $v$ is not in $S$, a contradiction.
  So there is no edge between any two vertices in $S$ and such $S$ is an independent set.
  \hfill \qed
  \end{proof}

  \begin{theorem}\label{thm_q2_k2}
  When $q= 2$ and $k= 2$, PCP is NP-Complete.
  \end{theorem}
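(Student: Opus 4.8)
The plan is to prove NP-hardness by a polynomial reduction from \textsc{$3$-SAT}, which is NP-complete (cf.\ the discussion of $k$-SAT above); membership in NP is Theorem~\ref{order}. Given a $3$-CNF formula $\phi$ with variables $x_1,\dots,x_n$ and clauses $C_1,\dots,C_m$ (we may assume every clause has exactly three literals over three distinct variables, since \textsc{$3$-SAT} stays NP-complete under this restriction), I will build a graph $G$ together with a partition whose parts all have size at most $2$, and ask for a selection $S$ with $\chi(G[S])\le 2$. The whole design rests on the fact that a graph is $2$-colorable iff it is bipartite iff it has no odd cycle, so the gadgets are engineered so that an odd cycle appears in $G[S]$ exactly when some clause is falsified.

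The building blocks are: (i) for each variable $x_i$, a \emph{variable part} $\{T_i,F_i\}$, where choosing $T_i$ (resp.\ $F_i$) encodes $x_i=\mathrm{true}$ (resp.\ $\mathrm{false}$); (ii) for each clause $C_j=(\ell_{j1}\vee\ell_{j2}\vee\ell_{j3})$, three \emph{private} size-$2$ parts $\{a_{ji},a_{ji}'\}$ ($i=1,2,3$) together with the triangle $a_{j1}a_{j2}a_{j3}$, so that this triangle survives in $G[S]$ iff all three of $a_{j1},a_{j2},a_{j3}$ are chosen; (iii) a \emph{conflict gadget} forbidding two specified vertices $u,w$ from being chosen simultaneously, realized as a triangle $u$--$z_{uw}$--$w$ through a fresh singleton part $\{z_{uw}\}$: if both $u$ and $w$ lie in $S$ this is an odd cycle, and otherwise $z_{uw}$ is just a leaf (or isolated) in $G[S]$. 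For each literal $\ell_{ji}$ I attach two conflict gadgets pinning $a_{ji}$ to the ``literal-false'' vertex of its variable and $a_{ji}'$ to the ``literal-true'' vertex; concretely, if $\ell_{ji}=x_v$ I forbid $\{a_{ji},T_v\}$ and $\{a_{ji}',F_v\}$ (and symmetrically if $\ell_{ji}=\overline{x}_v$), so that in any selection avoiding all conflicts one has $a_{ji}\in S\iff\ell_{ji}$ is false. All parts have size at most $2$ and $G$ has size $O(n+m)$, so the construction is polynomial and is an instance with $q=2$, $k=2$.

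For the ``satisfiable $\Rightarrow$ yes'' direction, from a satisfying assignment I pick $T_i$ or $F_i$ accordingly, and for each clause pick $a_{ji}'$ for (say) one of its true literals and $a_{ji}$ for the remaining two; one checks that no conflict gadget is triggered, so every $z$-vertex is a leaf in $G[S]$, no clause triangle is complete (a true literal removed one of $a_{j1},a_{j2},a_{j3}$), and no $a$--variable edge is present. Hence $G[S]$ is a forest, in particular $2$-colorable. Conversely, if $G[S]$ is bipartite, then no conflict triangle is complete, which forces exactly the pinning equivalences $a_{ji}\in S\iff\ell_{ji}$ is false; and no clause triangle is complete, so for each $j$ some $a_{ji}\notin S$, i.e.\ some $\ell_{ji}$ is true. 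Reading off $x_v:=\mathrm{true}$ iff $T_v\in S$ then gives a satisfying assignment of $\phi$.

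The delicate point — and the reason the clause triangles sit on fresh vertices $a_{ji}$ rather than directly on the variable vertices — is ruling out \emph{spurious} odd cycles in the completeness direction. If clause triangles were placed on the shared variable vertices, a satisfying assignment could still leave, e.g., three variable vertices pairwise joined by leftover edges coming from three different clauses, producing an odd cycle even though $\phi$ is satisfied. Routing every clause through its own private triangle and linking it to the variables only through the leaf-producing conflict gadgets guarantees that $G[S]$ is a forest whenever the selection is consistent, which is exactly what makes the equivalence ``$G[S]$ bipartite $\iff \phi$ satisfiable'' go through; verifying this forest property (and the absence of any other short odd cycle through the gadgets) is the main thing the full proof must check carefully.
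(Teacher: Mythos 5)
Your reduction is from 3-SAT, as in the paper, but the gadgetry is genuinely different. The paper anchors two singleton vertices $g,r$ joined by an edge so that the two colors themselves encode truth values: literal vertices are singletons, contrary literals are joined by edges, and a per-clause ``middle layer / conflict layer'' (parts $\{m^t_i,c^t_i\}$ of size~2, with the conflict layer forming a triangle) forces at least one literal vertex to receive the ``true'' color. You instead encode the assignment in the \emph{selection}: variable parts $\{T_i,F_i\}$, private per-clause triangles on fresh vertices $a_{j1},a_{j2},a_{j3}$, and conflict triangles through forced singletons $z_{uw}$ that make two vertices mutually exclusive; 2-colorability is exploited only through ``complete triangle $=$ odd cycle''. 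Your approach is closer in spirit to the paper's $k=1$, $q\ge 3$ reduction (selection encodes the assignment), upgraded to $k=2$ via triangle gadgets, whereas the paper's proof leans on the coloring itself to carry the truth values. Both are polynomial, both give $q=2$, $k=2$, and your identification of the delicate point (no spurious odd cycles; $G[S]$ should be a forest for consistent selections) is exactly the right thing to verify, and it does go through: every $z$-vertex has degree at most one in $G[S]$ when no conflict is triggered, the direct edges $uw$ of conflict triangles are absent, and each clause triangle contributes at most one edge, so $G[S]$ is a disjoint union of stars and single edges.

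There is, however, a concrete slip in your ``satisfiable $\Rightarrow$ yes'' direction. You say: for each clause pick $a_{ji}'$ for \emph{one} of its true literals and $a_{ji}$ for the remaining two. If a clause has two or three true literals, this puts $a_{ji}$ into $S$ for a \emph{true} literal $\ell_{ji}$, while the corresponding literal-true vertex ($T_v$ or $F_v$) is also in $S$; the conflict gadget you installed for exactly that pair is then triggered, its triangle lies entirely in $G[S]$, and $G[S]$ is not 2-colorable, contradicting your claim that no conflict gadget fires. The fix is immediate and is forced by your own pinning equivalence: select $a_{ji}'$ for \emph{every} true literal and $a_{ji}$ for \emph{every} false literal. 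Since each clause has at least one true literal, at most two of $a_{j1},a_{j2},a_{j3}$ are selected, so no clause triangle is complete, no conflict is triggered, and the forest argument above applies. With that correction (and a line actually carrying out the forest verification you currently defer), your proof is complete.
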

  \begin{proof}
  Theorem~\ref{order} shows that the problem is in NP.
  For the NP-Hardness, we give a reduction from the known NP-Complete problem 3-SAT to PCP with $q=2$ and $k=2$.

  Let $\phi$ be a 3-SAT formula of
  $m$ clauses $C_1, C_2, \cdots, C_m$ on $n$ boolean variables $x_1, x_2,\cdots,x_n$, where we can assume that $|C_t| = 3$ holds for each clause $C_t$.
  We construct an instance of PCP. The graph $G=(V,E)$ contains $|V| = 9m+2$ vertices. In the $p$-partition $\mathcal{V}$,
  each part has most $q=2$ vertices and $p=6m+2$. We will show that $G$ has a selection $S$ of $\mathcal{V}$ such that
  the chromatic number of $G[S]$ is at most $k=2$ if and only if $\phi$ is satisfiable.

  The graph $G$ is constructed in the following way.

  First, we  introduce two vertices denoted by $g$ and $r$.
  Then, for each clause $C_t= (a^t_1\vee a^t_2\vee a^t_3)$ ($t\in\{1,\cdots,m\}$) in $\phi$, we introduce $9$ vertices that are divided into three parts of three vertices, called the \emph{literal layer}, the \emph{middle layer} and the \emph{conflict layer}.
  The three vertices in the literal layer are denoted by $l^t_1, l^t_2, l^t_3$,
  the three vertices in the middle layer are denoted by $m^t_1, m^t_2, m^t_3$, and
  the three vertices in the conflict layer are denoted by $c^t_1, c^t_2, c^t_3$.
  For $i\in {1,2,3}$ the four vertices $l^t_i$, $m^t_i$ and $c^t_i$ are associated with the literal $a^t_i$.
  In total, the graph has $9m+2$ vertices.

  For edges in the graph $G$, we first add an edge between $g$ and $r$.
  Then, for each clause $C_t$, we introduce $9$ edges as follows.
        \begin{itemize}
          \item Connect $l^t_i$ to $m^t_i$ for each each $i\in \{1,2,3\}$ (3 edges);
          \item Connect each vertex $m^t_i$ ($i\in \{1,2,3\}$) in the middle layer to $g$ (3 edges);
          \item Connect each pair of vertices in the conflict layer to form a triangle (3 edges).

        \end{itemize}
  Last, for each pair of contrary literals ${a^{t_1}_i}$ and
   $a^{t_2}_j$ ($i,j\in\{1,2,3\}, t_1,t_2 \in\{1,\cdots,m\}$) in $\phi$,
  add an edge between the two vertices associated with ${a^{t_1}_i}$ and $a^{t_2}_j$ in the literal layers.

  In terms of the $p$-partition $\mathcal{V}$, we will have $p=6m+2$ parts, each of which contains at most $q=2$ vertices.
  Vertices $g$ and $r$ form two separated parts containing one vertex, $\{g\}$ and $\{r\}$.
  For each clause $C_t$, the 9 vertices associated with it will be divided into 6 parts: $\{l^t_i\}$ and $\{m^t_i, c^t_i\}$ for $i=1,2,3$.

  \begin{figure}
  \centering
  \includegraphics[width=4.0in]{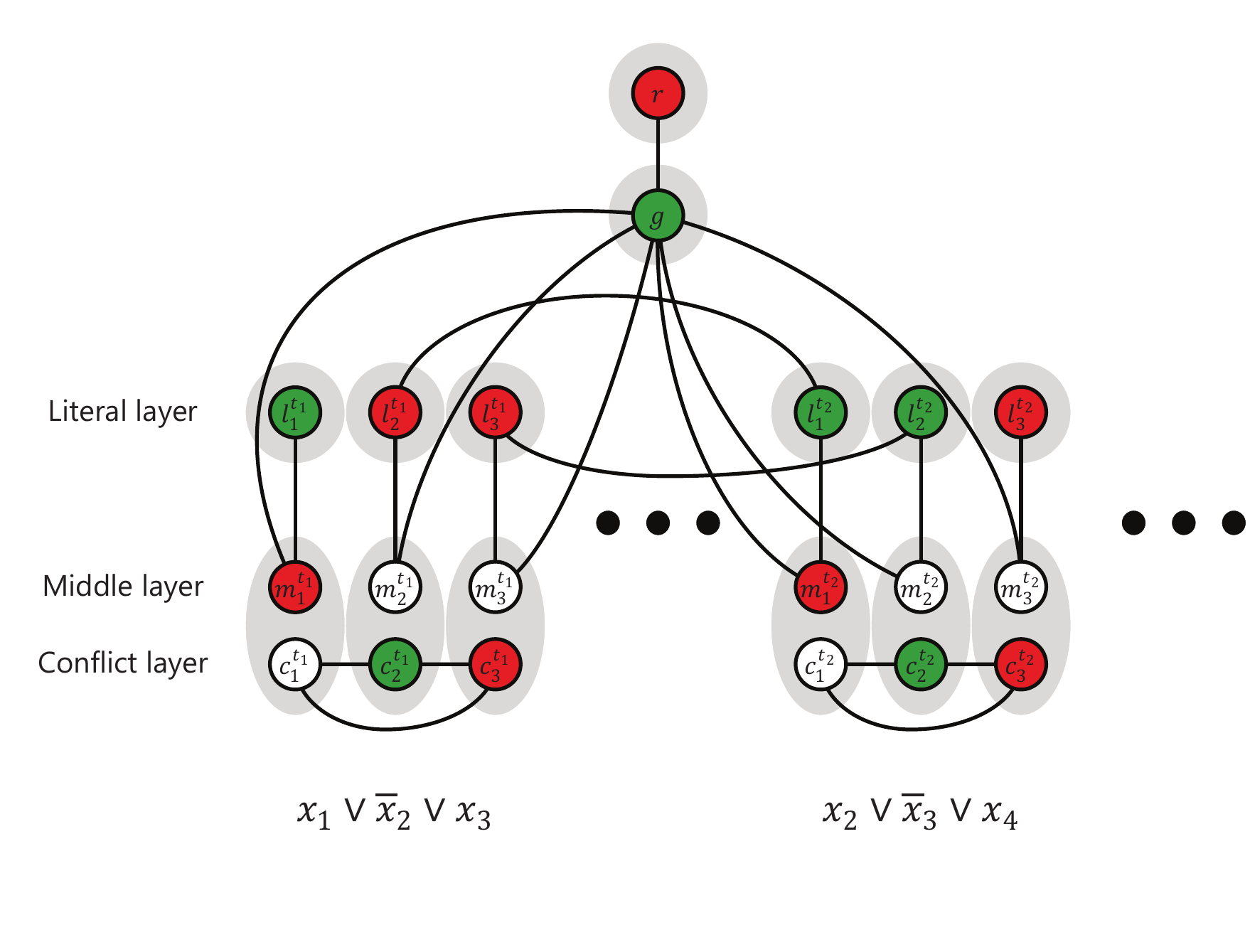}
  \captionsetup{justification=centering}
  \caption{An example of the construction with two clauses $C_{t_1} = x_1\vee \overline{x}_2\vee x_3$ and $C_{t_2}=x_2\vee \overline{x}_3\vee x_4$, where $x_1$ and $x_2$ have value true, and $x_3$ and $x_4$ have value false. In the figure. a grey shadow represents a part of the partition.}\label{fig1}
  \end{figure}

  An illustration of the construction is shown in Figure~\ref{fig1}.
  We now show that CNF $\phi$ is satisfiable if and only if there is a selection $S$ of $\mathcal{V}$ such that $G[S]$ is $2$-colorable.

  \textbf{The ``$\Leftarrow$'' part:} Assume that $S$ is a selection of $\mathcal{V}$ such that $G[S]$ is $2$-colorable
  and let $c:S \rightarrow \{green, red\}$ be a 2-coloring of $G[S]$.
  Since there is an edge between $g$ and $r$, we know that $c(g)\neq c(r)$. With loss of generality, we assume that
  $c(g)= green$ and $c(r)= red$.

  Each vertex in the literal layers must be in the selection $S$ since it is in a part of size 1.
  We claim that
  \begin{property}\label{lemma_one true}
   For the three vertices $l^t_1, l^t_2$ and  $l^t_3$ ($t\in \{1,\cdots, m\}$) in a literal layer associated to the clause $C_t$, at least one of them is assigned $green$ in the 2-coloring $c$.
  \end{property}
  Assume to the contrary that $c(l^{t}_1)= c(l^{t}_2)= c(l^{t}_3) =red$. For this case, none of $m_1^{t}, m_2^{t}$ $m_3^{t}$ can be assigned to
  either $red$ or $green$ and then none of them is in $S$. Such all of $c_1^{t}, c_2^{t}$ and $c_3^{t}$
  are in $S$. Note that $c_1^{t}, c_2^{t}$ and $c_3^{t}$ form a triangle. It is impossible to color them by using only 2 colors,
  a contraction. So Property~\ref{lemma_one true} holds.

  For each clause $C_t$ ($t\in \{1,\ldots, m\}$), we select an arbitrary vertex $l_i^t$ in the literal layer with $green$ color in the 2-coloring $c$ and assign the corresponding literal $a_i^t$ value 1. After doing this, if there are still variables without assigned the value, arbitrarily assign 1 or 0 to it. We claim the above assignment of the variables is a truth assignment to $\phi$.
  Note that there is an edge between any pair of contrary literals in the graph $G$. So it is impossible that two
  contrary literals are assigned $green$ in $c$. Therefore, the above assignment is a feasible assignment.
  Furthermore, by Property~\ref{lemma_one true}, we know that each clause will have at least one literal assigned value 1.
  Therefore, it is a truth assignment to $\phi$.

  \textbf{The ``$\Rightarrow$'' part:}
  Assume there is a truth assignment of $\phi$. We show that there is a selection $S$ of $\mathcal{V}$ and
  a 2-coloring $c$ of $G[S]$.
  First of all, vertices $g$ and $r$ are selected into $S$. We let $c(g)=green$ and $c(r)=red$.
  Second, all vertices in the literal layers are selected into $S$, and a vertex $l^t_i$ in the literal layers
  is assigned to color $green$ (resp., $red$) in $c$ if the corresponding literal $a_i^t$ has value 1 (resp., 0).
  Third, for the parts containing a vertex in the middle layer and a vertex in the conflict layer,
  we select the vertex $m^t_i$ in the middle layer into $S$ and assign color $red$ to it if the corresponding vertex $l^t_i$
  in the literal layer is assigned to color $green$; we select the vertex $c^t_i$ in the conflict layer into $S$
  if the corresponding vertex $l^t_i$
  in the literal layer is assigned to color $red$. For the color of $c^t_i$, if no neighbor of $c^t_i$ in $G[S]$ has been assigned
  a color, we assign color $green$ to it, and otherwise, we assign color $red$ to it.
  We argue that above coloring is a feasible 2-coloring of $G[S]$.
  After the second step, no adjacent vertices are assigned the same color because there are only edges between pairs of contrary literals. In the third step, we can assign a $red$ color to vertices $m^t_i$ in the middle layer because they are only adjacent to vertices $T$ and $l^t_i$, both of which are assigned $green$. For the vertices $c^t_i$ in the conflict layer, each of them is adjacent to at
  most one vertex in $G[S]$ because at least one vertex in the literal layer of each clause $C_t$ is assigned $green$ by Property~\ref{lemma_one true} and then at most two vertices in the conflict layer of $C_t$ can be selected into $S$.
  So our way to color vertices in conflict layer is correct.
    \hfill \qed   \end{proof}

  Readers are refer to Figure~\ref{fig1} for an illustration of the reduction of the case $q=2$ and $k=2$, where $\phi$ contains at least two
  clauses $C_{t_1}=x_1\vee \overline{x}_2\vee x_3$ and $C_{t_2}=x_2\vee \overline{x}_3\vee x_4$. When $x_1$ and $x_2$ have value true, $x_3$ and $x_4$ have value false, then we assign $l_1^{t_1}$, $l_1^{t_2}$, $l_2^{t_2}$, $c_2^{t_1}$ and $c_2^{t_2}$ $green$ and we assign $l_2^{t_1}$, $l_3^{t_1}$, $l_3^{t_2}$, $m_1^{t_1}$, $m_1^{t_2}$, $c_3^{t_1}$ and $c_3^{t_2}$ $red$.


  Now let us extend the below theorem to the cases $q\ge 2$ and $k\ge 2$.

  \begin{corollary}\label{coro_q2_k2}
  For each constant $q\ge 2$ and for each constant $k\ge 2$, PCP is NP-Complete.
  \end{corollary}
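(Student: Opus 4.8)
The plan is to give a polynomial reduction from the $q=2$, $k=2$ version of PCP, which is NP-complete by Theorem~\ref{thm_q2_k2}; membership in NP for every constant setting of the parameters is Theorem~\ref{order}. So fix constants $q'\ge 2$ and $k'\ge 2$, let $(G,\mathcal{V},2)$ be a PCP instance in which every part has at most two vertices, and build an instance $(G',\mathcal{V}',k')$ whose parts all have at most $q'$ vertices.

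I would use two gadgets. First, a \emph{color-boosting clique}: add $k'-2$ new vertices that form a clique, make each of them adjacent to every vertex of $G$, and place each in its own singleton part. Call this vertex set $R$. Because $R$ induces a clique and is completely joined to $V(G)$, for any selection $S$ of $\mathcal V$ we have $\chi(G'[S\cup R])=\chi(G[S])+(k'-2)$, so $\chi(G[S])\le 2$ if and only if $\chi(G'[S\cup R])\le k'$. This step raises the color budget to $k'$ without enlarging any part.

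Second, a \emph{guarded padding}: add $k'$ new vertices $f_1,\dots,f_{k'}$ that form a clique, each in its own singleton part (these are forced into every selection, and a clique of $k'$ vertices can always be properly colored with $\{1,\dots,k'\}$). Then, for every part that currently has fewer than $q'$ vertices, add brand-new \emph{dummy} vertices until it has exactly $q'$ vertices, and make every dummy vertex adjacent to all of $f_1,\dots,f_{k'}$. Now any selection that contains a dummy $d$ also contains $\{f_1,\dots,f_{k'}\}$, hence contains a $K_{k'+1}$ and has chromatic number at least $k'+1>k'$; therefore every feasible selection of $\mathcal V'$ avoids all dummies, so it consists of $R$, $\{f_1,\dots,f_{k'}\}$, and exactly one original vertex per original part. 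Since the clique $\{f_1,\dots,f_{k'}\}$ is adjacent only to dummies, it is an isolated $K_{k'}$ inside such a selection and contributes nothing beyond the colors $\{1,\dots,k'\}$. Combining with the first gadget, a selection of $\mathcal V'$ is feasible for $(G',\mathcal V',k')$ if and only if it has the form $S\cup R\cup\{f_1,\dots,f_{k'}\}$ with $S$ a selection of $\mathcal V$ satisfying $\chi(G[S])\le 2$.

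Both directions of the equivalence are then immediate from the two chromatic-number identities above, and the construction adds only $(k'-2)+k'+O(q'\,p)$ new vertices (and a corresponding polynomial number of edges), which is polynomial since $q'$ and $k'$ are constants. The only place that needs genuine care is arguing that dummy vertices can never occur in a feasible selection --- this is exactly the role of the clique $\{f_1,\dots,f_{k'}\}$ --- so that feasible selections of the new instance correspond, by restriction and extension, precisely to feasible selections of the original one; everything else is routine bookkeeping on chromatic numbers.
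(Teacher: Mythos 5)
Your reduction is correct and starts from the same place as the paper: a polynomial reduction from the $q=2$, $k=2$ case of Theorem~\ref{thm_q2_k2}, with NP membership from Theorem~\ref{order}, and with exactly the paper's color-boosting gadget (a $K_{k'-2}$ completely joined to $V(G)$, each of its vertices a singleton part, so that $\chi(G'[S\cup R])=\chi(G[S])+(k'-2)$). Where you diverge is in how the part size $q'$ is realized. The paper does something lighter: it adds a \emph{single} new part $\{u_1,\dots,u_{q'}\}$ of $q'$ isolated vertices, which may be selected and colored arbitrarily, so no forcing argument is needed at all. You instead pad \emph{every} original part up to size $q'$ with dummies and add a forced guard clique $\{f_1,\dots,f_{k'}\}$ adjacent to all dummies so that selecting any dummy creates a $K_{k'+1}$; this works, but it buys nothing over the isolated-part trick and costs you the extra gadget plus the argument that dummies never occur. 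Two small points to tighten: (i) the phrase ``for every part that currently has fewer than $q'$ vertices'' must exclude the singleton parts of the guard clique itself --- if the $f_i$'s parts were padded, they would no longer be forced, a dummy could replace some $f_i$, and then other dummies would only see a $K_{k'-1}$ among the remaining $f_j$'s, so the exclusion argument would collapse; your own parenthetical (``these are forced into every selection'') shows this is what you intend, but it should be stated explicitly; (ii) since $q$ is the \emph{maximum} part size, padding every part to exactly $q'$ is unnecessary --- one part of size $q'$ suffices, which is precisely the paper's simplification.
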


  \begin{proof}
  For any fixed constants $q'\ge 2$ and $k'\ge 2$, we show that the case that $q=2$ and $k=2$ can be polynomially reduced to the case that $q=q'$ and $k=k'$.

  For an instance of PCP with $q=2$ and $k=2$, a graph $G=(V,E)$ and a $p$-partition $\mathcal{V}$ of $V$, we construct another instance of PCP with $q=q'$ and $k=k'$: a graph $G'=(V',E')$ and a $p'$-partition $\mathcal{V}'$ of $V'$.

  The graph $G'$ contains a copy of $G=(V,E)$, a complete graph $K_{k'-2}=(V_0=\{w_1,w_2,\cdots,w_{k'-2}\}, E_0)$ on $k'-2$ vertices and $q'$ vertices $u_1, u_2,\cdots,u_{q'}$.
  To finish the construction of $G'$, we connect all the vertices in $V$ to all the vertices in $K_{k'-2}$.

  The $p'$-partition $\mathcal{V}'$ is given as follows: $$\mathcal{V'}=\mathcal{V}\cup\{\{w_1\},\{w_2\},\cdots,\{w_{k'-2}\},\{u_1,u_2,\cdots,u_{q'}\}\}.$$
  So $p'=p+k'-1$.

  We now show that there is a selection $S$ of $G$ such that $G[S]$ is $2$-colorable if and only if there is a selection $S'$ of $G'$ such that $G'[S']$ is $k'$-colorable.

  \textbf{The ``$\Leftarrow$'' part:}
  Given a selection $S$ that is $2$-colorable in $G$, we construct $S'$ by copying $S$ and adding all the vertices in $K_{k'-2}$ and an arbitrary vertex $u\in \{u_1,u_2,\cdots,u_{q'}\}\}$.
  We can see that $G'[S']$ is $k'$-colorable: since $S$ can be assigned by $2$ colors, we can assign vertices in $K_{k'-2}$ by another $k'-2$ colors and assign $u$ an arbitrary color.

  \textbf{The ``$\Rightarrow$'' part:}
  Given a selection $S'$ that is $k'$-colorable in $G'$, we let $S=S'\cap V$ and show that $S$ is $2$-colorable in $G$.
  We can see that $S'$ must contain all the vertices in $K_{k'-2}$.
  Since the vertices in $K_{k'-2}$ are assigned by $k'-2$ different colors, the vertices in $S$ are assigned by another $2$ colors.
    \hfill \qed   \end{proof}

  \begin{theorem}\label{thm_q3_k1}
  When $k=1$, PCP is NP-Complete for each constant $q\geq3$.
  \end{theorem}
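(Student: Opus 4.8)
The plan is to reduce from 3-SAT, which is NP-complete (as recalled above, $k$-SAT is NP-hard for every fixed $k\ge 3$); membership in NP is already supplied by Theorem~\ref{order}, so only NP-hardness needs work. The first observation is that when $k=1$ a selection $S$ satisfies $\chi(G[S])\le 1$ exactly when $G[S]$ is an independent set, so PCP with $k=1$ is precisely the question ``does $\mathcal{V}$ admit a selection that is an independent set?''

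Given a 3-SAT formula $\phi$ with clauses $C_1,\dots,C_m$ over variables $x_1,\dots,x_n$ (we may preprocess away tautological clauses and assume each clause has exactly three literals, repeating a literal if necessary), I would build an instance of PCP with $k=1$ as follows. For each clause $C_t=(a^t_1\vee a^t_2\vee a^t_3)$ introduce three vertices $v^t_1,v^t_2,v^t_3$, where $v^t_i$ is labelled by the literal $a^t_i$, and let $V_t=\{v^t_1,v^t_2,v^t_3\}$ be one part of the partition; thus $p=m$ and every part has size at most $3\le q$. For the edges: join $v^{t}_i$ and $v^{t'}_j$ whenever the labels $a^t_i$ and $a^{t'}_j$ are contrary literals (one is $x_\ell$, the other $\overline{x_\ell}$ for some $\ell$), and add no other edges. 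This is polynomial-time constructible, and because all parts have size $\le 3$ it is simultaneously a valid instance for every constant $q\ge 3$ (if one insists the bound $q$ be attained, attach one isolated part of size exactly $q$).

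The correctness proof splits into the two standard directions, both short. For ``$\Rightarrow$'': from a satisfying assignment choose for each clause $C_t$ one literal $a^t_{i(t)}$ it makes true and set $S=\{v^t_{i(t)}\colon t=1,\dots,m\}$; this is a selection, and it is independent because two of its vertices can be adjacent only if their labels are contrary literals, which no single assignment satisfies simultaneously. For ``$\Leftarrow$'': from an independent selection $S=\{v^t_{i(t)}\}$ read off a partial assignment by making each label $a^t_{i(t)}$ true; this is consistent exactly because $S$ contains no two vertices with contrary labels (such a pair would be adjacent), it satisfies every clause since $C_t$ contains the true literal $a^t_{i(t)}$, and it extends arbitrarily to the remaining variables. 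Hence $\phi$ is satisfiable iff the constructed instance is a yes-instance.

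I do not expect a real obstacle in the argument itself; the construction is a direct clause-gadget reduction. The only points needing care are presentational: handling the mild degeneracy that a tautological clause would create an edge inside one part (harmless, since only one vertex per part is ever picked, but cleanest to remove by preprocessing), and phrasing the single construction so that it establishes NP-completeness for all constants $q\ge 3$ at once rather than separately for each value of $q$.
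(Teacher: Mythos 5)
Your proof is correct and follows essentially the same route as the paper: one vertex per literal occurrence, parts given by clauses, edges between contrary literals, with satisfiability equivalent to an independent (i.e., 1-colorable) selection. The only cosmetic difference is that you reduce from 3-SAT while the paper reduces from $q$-SAT (clauses of size at most $q$), which changes nothing in substance since either source problem is NP-complete for the relevant range.
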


  \begin{proof}
  Theorem~\ref{order} shows that the problem is in NP. We only need to prove the NP-hardness.
  We give a reduction from $q$-SAT to PCP with $k=1$.
  Note that the $q$-SAT is NP-Complete when $q\ge 3$.

  Let $\phi$ be a $q$-SAT formula of
  $m$ clauses $C_1, C_2, \cdots, C_m$ on $n$ boolean variables $x_1, x_2,\cdots,x_n$, where $|C_t| \leq q$ holds for each clause $C_t$.
  We construct a PCP instance of a graph $G$ and a partition $\mathcal{V}$.
  The graph $G$ is constructed as follows: for each literal in a clause, we introduce a vertex associated with the literal. For each pair of contrary literals, we add an edge between the two associated vertices.
  The partition $\mathcal{V}$ is exactly obtained according to the clauses of $\phi$, i.e., $\mathcal{V}$ contains $m$ parts and
  each part $V_t\in \mathcal{V}$ contains the vertices associated to literals in the clause $C_t$.
  Thus, each part has size at most $q$.
  We now claim that there is a selection $S$ of $\mathcal{V}$ such that the induced graph $G[S]$ is $1$-colorable if and only if the formula $\phi$ is satisfiable.

  \textbf{The ``$\Rightarrow$'' part:}
  Given a selection $S$ that is $1$-colorable, we assign value 1 to the literals associated with the vertices in $S$.
  Note that $S$ is an independent set. So no two contrary literals will be assigned the same value and the above assignment is feasible.
  After this, if there are any variables left without assigning a value, we assign arbitrary values to them.
  It is safe to say that the above assignment is a truth assignment to $\phi$ as each clause has at least one literal being value 1.

  \textbf{The ``$\Leftarrow$'' part:}
  If $\phi$ is satisfiable, we can construct a selection $S$ of the partition $\mathcal{V}$ such that $G[S]$ is $1$-colorable.
  Let $A$ be a truth assignment to $\phi$.
  For any part $V_t$ in $\mathcal{V}$, the associated clause $C_t$ must contain at least one literal with value 1 since in $A$ since $A$ is a truth assignment.
  We arbitrary select a vertex $v\in V_i$ such that  the literal associated to $v$ has value 1.
  Since no two contrary literals are assigned value 1 simultaneously, we know that the above selection gets an independent set.
    \hfill \qed   \end{proof}

  \subsection{Parameter $p$}
  Next, we consider the parameter $p$. It is easy to see that PCP is polynomially solvable when $p$ is a constant.
  A simple brute-force algorithm runs in polynomial time: by enumerating all vertices in each part to search the selection $S$ we will get at most $\prod_{i=1}^p |V_i|$ candidates for $S$ in $\prod_{i=1}^p |V_i|\leq n^p$ time; for each candidate we can check whether it is $k$-colorable in $O(p^k)$ time, where $k\leq p$. When $p$ is a constant, the algorithm runs in polynomial time.

  \begin{theorem}\label{thm_pconstant}
  When $p$ is a constant, PCP is polynomially solvable.
  \end{theorem}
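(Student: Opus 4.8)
The plan is to solve PCP by an exhaustive search over all selections, which is feasible precisely because a selection consists of exactly $p$ vertices and $p$ is a constant. First I would enumerate every selection $S$ of $\mathcal{V}$: a selection picks one vertex from each part $V_i$, so there are $\prod_{i=1}^{p}|V_i|$ of them, and since each $|V_i|\le n$ this is at most $n^p$, a polynomial number of candidates when $p$ is fixed.

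Next, for each candidate selection $S$ I would test whether $\chi(G[S])\le k$. Here I would first observe that if $k\ge p$ the test succeeds trivially, since one can give the $|S|=p$ vertices pairwise distinct colors; hence it suffices to handle $k<p$. In that regime one can brute-force over all functions $c\colon S\to\{1,\dots,k\}$ — there are $k^p<p^p$ of them — and for each check in $O(p^2)$ time that no edge of $G[S]$ is monochromatic, which decides $k$-colorability of the $p$-vertex graph $G[S]$ in $O(p^{p+2})$ time, a constant. (Equivalently, one may invoke the $O(p^k)$ bound already noted in the text, using $k\le p$.)

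Combining the two steps, the procedure runs in $O(n^p\cdot p^{p+2})$ time, which is polynomial in $n$ for every fixed constant $p$. Correctness is immediate: the algorithm answers ``yes'' exactly when some enumerated selection $S$ passes the colorability test, and by definition this happens exactly when the PCP instance is a yes-instance.

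I expect no real obstacle here: the statement is essentially the brute-force argument sketched just before the theorem. The only point requiring a moment's care is bounding the cost of the inner colorability check by a constant, for which reducing to the case $k<p$ (or applying the $O(p^k)$ bound with $k\le p$) suffices.
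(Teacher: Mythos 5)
Your proposal is correct and follows essentially the same route as the paper: enumerate the at most $n^p$ selections and decide $k$-colorability of each $p$-vertex induced subgraph by brute force, which takes constant time for fixed $p$ (the paper states this check as $O(p^k)$ with $k\le p$, you as $O(k^p p^2)$ — the same idea). No gaps.
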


  Since PCP is NP-hard when $p$ is part of the input and polynomially solvable when $p$ is a constant, it is reasonable to consider whether PCP is fixed-parameter tractable by taking parameter $p$. We have the following negative result.

  \begin{theorem}\label{p_w1}
  Taking $p$ as the parameter, PCP is W[1]-hard even for each fixed $k\geq 1$.
  \end{theorem}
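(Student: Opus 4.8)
The plan is to reduce from the \textsc{Independent Set Problem} parameterized by the solution size $k_{\mathrm{IS}}$, which is W[1]-hard by Downey and Fellows. Given an \textsc{Independent Set} instance $(H=(V_H,E_H),k_{\mathrm{IS}})$, I would build a PCP instance whose number of parts $p$ is a function of $k_{\mathrm{IS}}$ only (ideally $p = k_{\mathrm{IS}}$ plus a small additive constant), so that an FPT algorithm for PCP in $p$ would yield one for \textsc{Independent Set} in $k_{\mathrm{IS}}$. The natural construction: make $k_{\mathrm{IS}}$ parts, each being a full copy of $V_H$, so that a selection picks one vertex $v_1,\dots,v_{k_{\mathrm{IS}}}$ of $H$. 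We must then force these chosen vertices to be pairwise nonadjacent in $H$ and pairwise distinct, and we must do this with an edge set and a target color count that makes $\chi(G[S])$ small exactly when the chosen vertices form an independent set.

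The key idea for encoding ``independent'' via a coloring constraint is to invert the adjacency: in the PCP graph $G$, put an edge between the copy of $u$ in part $i$ and the copy of $w$ in part $j$ (for $i\neq j$) precisely when $u$ and $w$ are \emph{non}-adjacent in $H$ and also when $u=w$. Then for a selection $S=\{v_1,\dots,v_{k_{\mathrm{IS}}}\}$, two selected vertices $v_i,v_j$ are adjacent in $G$ iff $v_i,v_j$ are distinct and non-adjacent in $H$ — so $G[S]$ is a clique of size $k_{\mathrm{IS}}$ exactly when the $v_i$ are a $k_{\mathrm{IS}}$-independent set of $H$, and otherwise $G[S]$ has at most $k_{\mathrm{IS}}-1$ vertices (a repeat collapses two parts) or is a non-complete graph on $k_{\mathrm{IS}}$ vertices. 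This does not immediately separate chromatic numbers the right way, so the cleaner route is to ask for a \emph{large} clique rather than a small chromatic number; since the theorem is stated for each fixed $k\ge 1$, I would instead target $k=1$ first: ask whether there is a selection $S$ with $\chi(G[S])\le 1$, i.e. $G[S]$ is an independent set. For that, keep $k_{\mathrm{IS}}$ vertex-copy parts and put an edge in $G$ between copies of $u$ (part $i$) and $w$ (part $j$) whenever $i\neq j$ and either $u=w$ or $uw\in E_H$; then $G[S]$ is edgeless iff the $k_{\mathrm{IS}}$ selected vertices are distinct and pairwise non-adjacent in $H$, i.e. form an independent set of size $k_{\mathrm{IS}}$. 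This gives W[1]-hardness for $k=1$ with $p=k_{\mathrm{IS}}$.

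To extend to every fixed $k\ge 1$, I would pad exactly as in Corollary~\ref{coro_q2_k2}: add a clique $K_{k-1}$ on fresh vertices $w_1,\dots,w_{k-1}$, each $\{w_i\}$ its own part, and join every $w_i$ to every vertex of the $k_{\mathrm{IS}}$ copy-parts. Then any $k$-colorable selection must use all of $K_{k-1}$ (they form a clique of size $k-1$, using $k-1$ distinct colors), leaving a single remaining color for the vertices chosen from the copy-parts, so those must again induce an edgeless graph. The new parameter is $p = k_{\mathrm{IS}} + (k-1)$, still a function of $k_{\mathrm{IS}}$ alone since $k$ is fixed, and the construction is polynomial in $|V_H|$. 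Combining the forward and backward directions: $H$ has an independent set of size $k_{\mathrm{IS}}$ iff the constructed $G$ has a selection $S$ with $\chi(G[S])\le k$, and $p$ depends only on the parameter, so an FPT algorithm for PCP in $p$ would decide \textsc{Independent Set} in FPT time in $k_{\mathrm{IS}}$, contradicting its W[1]-hardness.

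The main obstacle I anticipate is handling the ``repeated vertex'' and ``non-edge of $H$'' cases cleanly so that the reduction is faithful in both directions — in particular making sure the $u=w$ edges are present so that a selection cannot cheat by picking the same vertex of $H$ in two parts, and double-checking that when the selected vertices are \emph{not} an independent set of $H$ the induced graph $G[S]$ genuinely fails to be $k$-colorable (here it is automatic since even one edge among the copy-part vertices forces a second color beyond the $k-1$ forced by $K_{k-1}$). A secondary point to be careful about is the $k=1$ base case where there is no padding clique; there the argument is most transparent, and the padding lemma then lifts it uniformly. I would also remark that this reduction simultaneously reproves NP-hardness of PCP with $p$ part of the input for each fixed $k\ge 1$, strengthening the classical reduction.
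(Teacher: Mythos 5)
Your final construction is essentially identical to the paper's proof: a reduction from \textsc{Independent Set} using $k_{\mathrm{IS}}$ copy-parts of the host graph with edges between copies of equal or adjacent vertices, plus a padding clique $K_{k-1}$ in singleton parts joined completely to the copies, giving $p=k_{\mathrm{IS}}+k-1$; the correctness argument (the clique forces $k-1$ colors, so the selected copy vertices must be monochromatic, hence independent and distinct) is the same. The initial complement/clique detour is abandoned and harmless, so the proposal is correct and matches the paper's approach.
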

 
  \begin{proof}
  We show that for each fixed $k\geq 1$, there is an FPT reduction from the \textsc{$k_I$-Independent Set Problem} to PCP.

  Given an instance of the \textsc{$k_I$-Independent Set Problem}, a graph  $G=(V,E)$ and an integer $k_I$, we construct an instance of PCP.
  The graph $G'$ contains a complete graph  $K_{k-1}=(V_0=\{w_1, w_2, \cdots, w_{k-1}\},E_0)$ on $k-1$ vertices and $k_I$ copies of $G$, namely, $G_1=(V_1, E_1), G_2=(V_2, E_2), \cdots, G_{k_I}=(V_{k_I}, E_{k_I})$,
  where $K_{k-1}$ is regarded as empty when $k=1$.
  We still need to add some edges to finish the construction of the graph $G'$:
  for two vertices $v_i\in V_i$ and $v_j\in V_j$ from two different copies of $G$ ($i,j \in \{1,2,\dots, k-1\}$),
  if $v_i$ and $v_j$ are corresponding the same vertex or a pair of adjacent vertices in $G$, then we add an edge between them;
  for any vertex pair $u\in V_0$ and  $v\in V_1\cup\cdots\cup V_{k_I}$, we add an edge between them, and thus it becomes a complete bipartite graph between $V_0$ and  $V_1\cup\cdots\cup V_{k_I}$.
  The $p$-partition is given as follows: $\mathcal{V}=\{V_1, V_2, \cdots, V_{k_I}, \{w_1\}, \{w_2\},\cdots, \{w_{k-1}\}\}$. So $p=k_I + k - 1$. We claim that $G$ has an independent set of size $k_I$ if and only if  $G'$ has a selection $S$ of $\mathcal{V}$ such that $G'[S]$ is $k$-colorable.

  \textbf{The ``$\Rightarrow$'' part:}
  Assume that there is an independent set $I$ of $G$ with size $|I|=k_I$. Let $I = \{i_1, \cdots, i_{k_I}\}$.
  We find a solution to PCP instance as follows. From each of the $k_I$ copies of $G$,
  we select one vertex corresponding to a different vertex in the independent set $I$.
  So the $k_I$ vertices selected form the first $k_I$ parts in $\mathcal{V}$ will form an independent set. We color all the $k_I$ vertices with one color.
   All the $k-1$ vertices in $V_0$ will be also selected into $S'$ since each of them is in a part of a single vertex. We color these vertices with the other $k-1$ colors.
   Thus, $G'[S]$ is $k$-colorable.

  \textbf{The ``$\Leftarrow$'' part:}
  Assume that there is a selection $S$ of $\mathcal{V}$ such that $G'[S]$ is $k$-colorable.
  We show that $G$ has an independent set of size $k_I$.
  Since each vertex in $V_{0}$ is in a part of a single vertex in the partition $\mathcal{V}$, all the vertices in $V_{0}$ are in $S$.
  Let $I=S\setminus V_0$.
  Note that $I$ and $V_0$ form a complete bipartite graph in $G'[S]$, and the chromatic number of $G_0=(V_0,E_0)$ is $k-1$.
  If $G'[S]$ is $k$-colorable, then all vertices in $I$ should receive the same color in a $k$-coloring.
  So we know that $I$ is an independent set in $G'$, which also implies that the vertices in the original graph $G$ corresponding $I$
  form an independent set of size $k_I$ in $G$.
    \hfill \qed   \end{proof}

  On the other hand, it is easy to see that PCP is FPT when both of $p$ and $q$ are taking as the parameters.
  \begin{theorem}\label{p_FPT}
  Taking $p$ and $q$ as the parameters, PCP is fixed-parameter tractable.
  \end{theorem}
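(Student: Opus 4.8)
The plan is to observe that both the number of candidate solutions and the cost of checking each candidate can be bounded purely in terms of $p$ and $q$, so the naive search already runs in $f(p,q)\cdot\mathrm{poly}(n)$ time. First I would bound the number of selections: since $\mathcal{V}=\{V_1,\dots,V_p\}$ has $p$ parts, each of size at most $q$, the number of selections $S$ of $\mathcal{V}$ is $\prod_{i=1}^p|V_i|\le q^p$, and all of them can be enumerated in $O(q^p\cdot p)$ time together with the incidence information needed to build each $G[S]$.

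Next, for a fixed selection $S$ --- a vertex set of size exactly $p$ --- I would test whether $\chi(G[S])\le k$. Here the only point needing care is that $k$ is part of the input rather than a parameter; this is handled by the trivial observation that every graph on $p$ vertices is $p$-colorable, so if $k\ge p$ the instance is a yes-instance and we are done. Hence we may assume $k< p$, and then $k$-colorability of the $p$-vertex graph $G[S]$ can be decided by brute force over all $k^p$ colour assignments, each checked in $O(p^2)$ time, giving a bound of $O(k^p\cdot p^2)\le O(p^{p+2})$, which depends only on $p$. (Any of the standard $O(2^p\,\mathrm{poly}(p))$ chromatic-number algorithms would work equally well and only improves the dependence on $p$.)

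Finally I would combine the two estimates: iterating over all at most $q^p$ selections, and for each one constructing $G[S]$ in $O(p^2)$ time from $G$ and running the colouring test in $O(p^{p+2})$ time, yields a total running time of $O\!\left(q^p\cdot(p^{p+2}+\mathrm{poly}(n))\right)=f(p,q)\cdot\mathrm{poly}(n)$ with, say, $f(p,q)=q^p\,p^{p+2}$. This is exactly the form required for fixed-parameter tractability with respect to the combined parameter $p+q$, completing the proof. I do not expect any genuine obstacle here: the argument is the same brute force already used for Theorem~\ref{thm_pconstant}, the new ingredient being only that bounding every part by $q$ turns the previously polynomial-in-$n$ selection count $\prod_i|V_i|$ into the parameter-dependent bound $q^p$; the sole subtlety, as noted, is disposing of the case $k\ge p$ so that the colouring subroutine is only ever invoked with $k<p$.
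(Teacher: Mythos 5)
Your proposal is correct and follows essentially the same route as the paper: enumerate the at most $q^p$ selections and, for each candidate, brute-force the at most $k^p\le p^p$ colorings of the $p$-vertex induced subgraph, yielding an $f(p,q)\cdot \mathrm{poly}(n)$ bound. Your treatment of large $k$ (if $k\ge p$, answer yes immediately since any $p$-vertex graph is $p$-colorable) is in fact cleaner than the paper's corresponding remark, which asserts the instance has no solution when $k>p$.
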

  \begin{proof}
  In fact, a simple brute-force algorithm is FPT. If $k>p$, the problem has no solution. Next, we assume that $k\leq p$.
  We enumerate all possible selections, the number of which is at most $q^p$. For each candidate selection, there are at most $k^p\leq p^p$ different ways to color them.
  To check whether a color is feasible can be done in linear time. So the algorithm runs in $O((pq)^p (|V|+|E|))$ time.
    \hfill \qed   \end{proof}

  \section{An Exact Algorithm for PCP}
  In this section, we consider fast exact algorithms for PCP.
  It is known that VCP can be solved in $O^*(2^n)$\footnote{The notation $O^*$ is a modified big-O notation that suppresses all polynomially bounded factors.} time by using subset convolution~\cite{bjorklund2007fourier}, while traditional dynamic programming algorithms can only lead to running time of  $O^*(3^n)$.
  By using the  $O^*(2^n)$-time algorithm for VCP, we can get a simple $O^*((\frac{2n}{p})^p)$-time algorithm for PCP: we enumerate all candidates of the selection and check whether they
  are $k$-colorable. The number of candidates of the selection is $\prod_{i=1}^{p}|V_i|$, which is at most $(\frac{n}{p})^p$ by the AM-GM inequality~\cite{cauchy1821cours}. Each candidate is a part of $p$ vertices and we use
  the $O^*(2^p)$-time algorithm to check whether it is $k$-colorable. So in total, the algorithm runs in $O^*((\frac{2n}{p})^p)$ time.
  Next, we use the subset convolution technique to improve the running time bounded to $O((\frac{n+p}{p})^pn\log k)$. Note that when $p=n$, the problem becomes VCP and the running time bound reaches the best known bound $O^*(2^n)$ for VCP.

  \begin{definition}\label{semi}
  Given a $p$-partition $\mathcal{V}=\{V_1,\cdots,V_p\}$, a vertex subset $S\subseteq V$ is called a \emph{semi-selection} if
  it holds that $|S\cap V_i|\le 1$  for all $1\le i \le p$.
  \end{definition}

   It is easy to see that a semi-selection $S$ with size $|S|=p$ is a selection.
   We use $\mathcal{S}$ denotes the set of all semi-selections corresponding to a $p$-partition $\mathcal{V}$.
  We have that
  \begin{equation} \label{e1}
  |\mathcal{S}|\le (\frac{n+p}{p})^p.
  \end{equation}
  Each semi-selection $S$ has at most one vertex in each part $V_i$. There are $|V_i| + 1$ possibilities for $S\cap V_i$. Such the number of
    semi-selections is $\mathcal{|S|}=\prod_{i=1}^p(|V_i| + 1)$.
  Since $\sum_{i=1}^p(|V_i|+1)=n+p$, by the AM-GM inequality, we have
  $\mathcal{|S|}\le (\frac{n+p}{p})^p$.

  The set $\mathcal{S}$ is a hereditary family, that is to say, for any semi-selection $S\in \mathcal{S}$, all the subsets of $S$ are also semi-selections.
  We will use the following lemma to design our algorithm.

  \begin{theorem}\label{ext_semi}
  Let $S\in \mathcal{S}$ be a semi-selection. Then $G[S]$ is $k$-colorable if there is a subset $T\subseteq S$ such that $G[T]$ is 1-colorable and $G[S\setminus T]$ is $(k-1)$-colorable.
  \end{theorem}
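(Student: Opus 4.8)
The plan is to turn the hypothesised decomposition into an explicit proper $k$-coloring of $G[S]$, so the argument is essentially a one-line construction plus a case check on edges.

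First I would unpack the hypotheses. Let $T\subseteq S$ be such that $G[T]$ is $1$-colorable and $G[S\setminus T]$ is $(k-1)$-colorable; in particular $k\ge 1$. Being $1$-colorable is the same as having no edge, so $T$ is an independent set of $G$. Fix a $(k-1)$-coloring $c':S\setminus T\to\{1,\dots,k-1\}$ of $G[S\setminus T]$ (when $k=1$ this forces $S\setminus T=\emptyset$, i.e.\ $T=S$, which is consistent). Now define $c:S\to\{1,\dots,k\}$ by $c(v)=c'(v)$ for $v\in S\setminus T$ and $c(v)=k$ for $v\in T$.

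Next I would verify that $c$ is a proper coloring of $G[S]$ by checking an arbitrary edge $uv\in E(G[S])$. If $u,v\in S\setminus T$, then $c(u)=c'(u)\neq c'(v)=c(v)$ since $c'$ is proper on $G[S\setminus T]$. The case $u,v\in T$ cannot occur because $T$ is independent. Otherwise exactly one endpoint, say $u$, lies in $T$ and the other, $v$, in $S\setminus T$; then $c(u)=k$ while $c(v)=c'(v)\le k-1$, so $c(u)\neq c(v)$. Hence $c$ is a proper $k$-coloring and $G[S]$ is $k$-colorable, which is exactly the claim.

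I do not expect any real obstacle here: the only points to keep straight are the equivalence between $1$-colorability of $G[T]$ and $T$ being edgeless, and the fact that the fresh color $k$ is distinct from the colors $1,\dots,k-1$ used on $S\setminus T$. I would also note in passing that the converse is equally immediate — taking $T$ to be any single color class of a $k$-coloring of $G[S]$ makes $G[T]$ $1$-colorable and leaves the remaining $k-1$ classes as a $(k-1)$-coloring of $G[S\setminus T]$ — so the statement is in fact an equivalence, which is what lets it serve as the recursive core of the subset-convolution algorithm.
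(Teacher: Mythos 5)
Your proof is correct, and it is worth noting that it proves the implication in the direction the statement literally asserts, whereas the paper's one-line argument actually goes the other way: the paper starts from a $k$-coloring of $G[S]$ and observes that any single color class can serve as the subset $T$ (using that $\mathcal{S}$ is a hereditary family, so $T$ and $S\setminus T$ are again semi-selections), i.e.\ it establishes the converse of the stated ``if''. Your explicit construction --- give every vertex of the independent set $T$ the fresh color $k$, keep a $(k-1)$-coloring on $S\setminus T$, and check the three possible edge types --- handles the stated direction cleanly, and your closing remark about extracting a color class supplies the converse, so together you cover the equivalence that is actually needed when this lemma feeds into Theorem~\ref{colorCon}. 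The one point your write-up passes over that the paper makes explicit is the hereditary property of $\mathcal{S}$: it plays no role in the coloring argument itself, but it is what guarantees that $T$ and $S\setminus T$ are themselves semi-selections, hence legitimate arguments of the function $f$ in the subset-convolution step; mentioning it would tie the lemma more tightly to its intended use.
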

  \begin{proof}
  Since $\mathcal{S}$ is a hereditary family, we know that each subset of $S$ is also a semi-selection.
  For a $k$-coloring of $G[S]$, the set of vertices with the same color is a satisfied subset $T$.
    \hfill \qed   \end{proof}

  Before introducing our algorithm, we give the definition of the subset convolution first.
  \begin{definition}\label{subCon}
  Let $\mathcal{S}$ be a hereditary family on a set contains $n$ elements and $g,h$ be two integer functions on $\mathcal{S}$, i.e., $g,h:\mathcal{S} \rightarrow \mathbb{Z}$. The subset convolution of $g$ and $h$, denoted by $g*h$, is a function assigning to any $S\in \mathcal{S}$ an integer
  $$(g*h)(S)=\sum_{T\subseteq S} g(T)\cdot h(S\setminus T).$$
  \end{definition}
  The subset convolution can be computed in time $O(n|\mathcal{S}|)$~\cite{bjorklund2007fourier}.

  We show how to use the subset convolution to solve PCP.
  Let $f$ be an indicator function on  semi-selections $f:\mathcal{S} \rightarrow \{0,1\}$. For any semi-selection $S$, $f(S)=1$ if $G[S]$ is 1-colorable, and $f(S)=0$ otherwise.
  Define $f^{*k}:\mathcal{S} \rightarrow \mathbb{Z}$ as follows:
  $$f^{*k}=\underbrace{f*f*\cdots*f}_{k \text{ times}}.$$
  We have the following theorem
  \begin{theorem}\label{colorCon}
  For any semi-selection $S\in \mathcal{S}$, the graph $G[S]$ is $k$-colorable if and only if $f^{*k}(S) > 0$.
  \end{theorem}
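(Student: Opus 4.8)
The plan is to prove the two directions of the equivalence separately, by induction on $k$, using Theorem~\ref{ext_semi} to connect the combinatorial statement (``$G[S]$ is $k$-colorable'') with the algebraic one (``$f^{*k}(S)>0$''). The base case $k=1$ is immediate: by definition $f^{*1}=f$, and $f(S)=1$ exactly when $G[S]$ is $1$-colorable, so $f^{*1}(S)>0 \iff G[S]$ is $1$-colorable. For the inductive step I would assume the statement holds for $k-1$ and prove it for $k$, unwinding the definition $f^{*k}=f*f^{*(k-1)}$, which gives
\[
f^{*k}(S)=\sum_{T\subseteq S} f(T)\cdot f^{*(k-1)}(S\setminus T).
\]

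For the ``$\Leftarrow$'' direction, suppose $f^{*k}(S)>0$. Since every term $f(T)\cdot f^{*(k-1)}(S\setminus T)$ is nonnegative (both factors are nonnegative: $f$ is $\{0,1\}$-valued and $f^{*(k-1)}$ is a sum of products of nonnegative numbers, hence $\ge 0$), the sum being positive forces at least one term to be positive, i.e.\ there exists $T\subseteq S$ with $f(T)=1$ and $f^{*(k-1)}(S\setminus T)>0$. Note $S\setminus T$ is again a semi-selection because $\mathcal{S}$ is hereditary, so the induction hypothesis applies: $G[S\setminus T]$ is $(k-1)$-colorable. Also $f(T)=1$ means $G[T]$ is $1$-colorable. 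Then Theorem~\ref{ext_semi} yields that $G[S]$ is $k$-colorable.

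For the ``$\Rightarrow$'' direction, suppose $G[S]$ is $k$-colorable and fix a $k$-coloring. Let $T$ be the set of vertices of $S$ receiving color $1$ (an independent set, possibly empty), so $G[T]$ is $1$-colorable and $f(T)=1$; and $G[S\setminus T]$ is $(k-1)$-colorable via the remaining $k-1$ colors. By the induction hypothesis $f^{*(k-1)}(S\setminus T)>0$, so the term for this particular $T$ in the sum above is at least $1$, and since all terms are nonnegative, $f^{*k}(S)>0$. I expect the main thing to be careful about is the nonnegativity bookkeeping — checking that $f^{*(k-1)}$ is always $\ge 0$ so that no cancellation occurs in the sum (this is what makes positivity of the convolution equivalent to \emph{existence} of a good split rather than a parity-type condition), and handling the degenerate cases where $T$ or $S\setminus T$ is empty (the empty graph is vacuously $1$-colorable and $0$-colorable, and one should confirm $f(\emptyset)=1$ so the recursion is consistent). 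The only genuinely substantive ingredient, Theorem~\ref{ext_semi} together with the hereditary property of $\mathcal{S}$, is already available, so the argument is a clean induction once the nonnegativity observation is in place.
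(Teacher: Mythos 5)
Your proof is correct and takes essentially the same route as the paper: the paper simply unrolls $f^{*k}=f*f^{*(k-1)}$ all the way down to $f^{*k}(S)=\sum_{S_1,\dots,S_k}\prod_{i=1}^k f(S_i)$ over partitions of $S$ and reads off both directions at once, whereas you package the same unrolling as an induction on $k$ with Theorem~\ref{ext_semi} supplying the inductive step. Your explicit bookkeeping of the nonnegativity of all terms (so positivity means existence of a good split) and of $f(\emptyset)=1$ only makes precise what the paper leaves implicit.
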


  \begin{proof}
    Since $f^{*k}=f*f^{*k-1}$, we can write $f^{*k}(S)$ in the following way
  $$\begin{aligned}
  f^{*k}(S)&=\sum_{S_1\subseteq S} f(S_1)\cdot f^{*k-1}(S\setminus S_1)\\
  &=\sum_{S_1\subseteq S}\sum_{S_2\subseteq (S\setminus S_1)}f(S_1)\cdot f(S_2)\cdot f^{*k-2}(S\setminus (S_1\cup S_2)) \\
  &\vdots\\
  &=\sum_{S_1,S_2,\cdots,S_k}\prod_{i=1}^kf(S_i),
  \end{aligned}$$
  where $S_1,S_2,\cdots,S_k$ form a partition of the semi-selection $S$.

  If $f^{*k} > 0$, then there must be a partition $\{S_1,S_2,\cdots,S_k\}$ of $S$ such that $\prod_{i=1}^kf(S_i) = 1$ which induces that $f(S_i)=1$ for each $S_i$, as a consequence $G[S]$ is $k$-colorable.

  In the opposite direction, $G[S]$ is $k$-colorable, so we can divide $S$ into $k$ $1$-colorable semi-selections $S_1,S_2,\cdots,S_k$ where $\prod_{i=1}^k f(S_i)=1$. Thus $f^*(S)>0$.
    \hfill \qed   \end{proof}

  By this theorem, to check whether there is a $k$-colorable selection, we only need to check whether there is a semi-selection $S$ such that $|S|=n$ and  $f^{*k}(S)>0$.
  The detailed steps of the algorithm is given below.
  \begin{algorithm}[H]
  \caption{An exact algorithm for PCP.}\label{alg1}
  \begin{algorithmic}[1]
  \REQUIRE A simple undirected graph $G=(V,E)$, a $p$-partition $\mathcal{V}$ and an integer $k$.\\
  \ENSURE `yes' or `no' to indicate wether there exits a selection $S$ such that $G[S]$ is $k$-colorable.\\
  \STATE Enumerate all semi-selections and store them in $\mathcal{S}$;
  \STATE Check all the semi-selections $S$ whether they are 1-colorable, and let $f(S)=1$ if $S$ is 1-colorable and $f(S)=0$ otherwise;
  \STATE Calculate the subset convolution $f^{*k}$;
  \STATE If there is a semi-selection $S$ such that $|S|=p$ and $f^{*k}(S) > 0$, stop and return `yes';
  \STATE return `no'.
  \end{algorithmic}
  \end{algorithm}

  \begin{theorem}\label{theorem time}
  Algorithm~\ref{alg1} solves PCP in $O((\frac{n+p}{p})^pn\log k)$ time.
  \end{theorem}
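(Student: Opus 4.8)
The plan is to establish correctness of Algorithm~\ref{alg1} and then charge the running time line by line, substituting the bound $|\mathcal{S}|\le(\frac{n+p}{p})^p$ from \eqref{e1} at the end. Correctness is almost immediate from Theorem~\ref{colorCon}: a semi-selection $S$ with $|S|=p$ is necessarily a selection, because $|S\cap V_i|\le 1$ for all $i$ together with $\sum_{i=1}^p|S\cap V_i|=p$ forces each intersection to have size one; hence $\mathcal{V}$ admits a $k$-colorable selection if and only if some semi-selection $S$ of size $p$ has $f^{*k}(S)>0$, which is exactly what line~4 checks. (The boundary case $k\ge p$ needs no special treatment: $f(\emptyset)=1$ makes $f^{*k}(S)>0$ for every $S$, consistent with every selection on $p\le k$ vertices being trivially $k$-colorable.)

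For the running time I would show that lines~1, 2 and 4 each cost $O(n|\mathcal{S}|)$ and that line~3 costs $O(n|\mathcal{S}|\log k)$, so that the total is $O(n|\mathcal{S}|\log k)=O((\frac{n+p}{p})^p n\log k)$ by \eqref{e1}. Lines~1 and 2 I would run together as a depth-$p$ recursion that processes the parts one at a time: a node at depth $i$ stores a partial semi-selection over $V_1\cup\dots\cup V_i$ and a Boolean array marking all vertices adjacent to a chosen vertex; one may extend by a vertex $v\in V_{i+1}$ only if $v$ is unmarked, in which case the neighbours of $v$ are marked in $O(n)$ time and unmarked on backtracking, so a semi-selection is recognised as $1$-colorable precisely when its node was never blocked. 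Since the recursion tree has $\sum_{i=1}^p\prod_{j\le i}(|V_j|+1)\le 2|\mathcal{S}|$ nodes, this produces $\mathcal{S}$ and the function $f$ in $O(n|\mathcal{S}|)$ time. For line~3 I would compute $f^{*k}$ not by the defining $k$-fold product but by repeated squaring: subset convolution on the hereditary family $\mathcal{S}$ is associative and commutative and satisfies $f^{*a}*f^{*b}=f^{*(a+b)}$, so expanding $k$ in binary and combining $f^{*1},f^{*2},f^{*4},\dots$ needs only $O(\log k)$ subset convolutions, each of which costs $O(n|\mathcal{S}|)$ by the cited result~\cite{bjorklund2007fourier}. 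Line~4 is a single scan over $\mathcal{S}$.

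The step I expect to require the most care is line~3. First one must justify the repeated squaring, i.e., check that subset convolution is associative — both $((f*g)*h)(S)$ and $(f*(g*h))(S)$ equal $\sum\prod_{i=1}^3 f_i(S_i)$ over ordered triples $(S_1,S_2,S_3)$ partitioning $S$, and the same unfolding gives $f^{*a}*f^{*b}=f^{*(a+b)}$, so $f^{*k}$ is well defined independent of bracketing. Second, one should check that the integers stay polynomially bounded: $f^{*k}(S)$ counts ordered $k$-tuples of independent sets whose union is $S$, so it is at most $k^p$, fits in $O(p\log k)$ bits, and hence each convolution remains within the claimed $O(n|\mathcal{S}|)$ count of arithmetic operations (alternatively one works modulo a random prime and detects $f^{*k}(S)\neq 0$ with high probability). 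A secondary point is that lines~1--2 genuinely need the incremental blocked-vertex bookkeeping sketched above; a naive $O(p^2)$ independence test per semi-selection would overshoot the bound once $p>\sqrt{n}$.
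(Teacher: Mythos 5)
Your proposal is correct and follows essentially the same route as the paper: correctness via Theorem~\ref{colorCon}, the bound $|\mathcal{S}|\le(\frac{n+p}{p})^p$ from \eqref{e1}, fast subset convolution at $O(n|\mathcal{S}|)$ per product, and the same repeated-squaring (doubling) trick to use only $O(\log k)$ convolutions. Your additional details (incremental enumeration of $1$-colorable semi-selections, associativity of the convolution, and the bound on the size of the intermediate integers) merely flesh out points the paper states without elaboration.
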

  \begin{proof}
  The correctness of Algorithm~\ref{alg1} follows from Theorem~\ref{colorCon} directly. Next, we analyze the running time bound.
  The size of each semi-selection is at most $p$. So Steps $1$, $2$ and $4$ in Algorithm $1$ can be implemented in time $O(|\mathcal{S}|p)$.
  The fast subset convolution algorithm computes $f*f$ in time $O(|\mathcal{S}|n)$~\cite{bjorklund2007fourier}.
  In Step 3, we need to compute $f^{*k}$. Instead of computing $k$ subset convolutions,  we apply a doubling trick, that is, to compute $f^{*k}$ by using the following recurrence relation
  $$f^{*k}=\begin{cases}
  f^{*\lfloor\frac{k}{2}\rfloor}*f^{*\lfloor\frac{k}{2}\rfloor}& k>1 \text{ and } k \text{ is even;}\\
  f^{*\lfloor\frac{k}{2}\rfloor}*f^{*\lfloor\frac{k}{2}\rfloor}*f& k>1 \text{ and } k \text{ is odd;}\\
  f& k=1\text{.}
  \end{cases}$$
  Thus, we only need to compute $\log k$ subset convolutions. Step~3 uses $O(|\mathcal{S}|p\log k)$ time.

  By (\ref{e1}), we have that $|\mathcal{S}|\le (\frac{n+p}{p})^p$. The algorithm runs in $O((\frac{n+p}{p})^pn\log k)$ time.
    \hfill \qed   \end{proof}

  We have proved that  PCP with parameter $p$ is W[1]-hard. It is unlikely to remove $n$ from the exponential part of the running time.
  Furthermore, when $p=n$, the problem becomes VCP and the running time bound $O^*(2^n)$, which is the best-known result for VCP.

  \section{Concluding Remarks}
  In this paper, we have analyzed the computational complexity of the \textsc{Partition Coloring Problem}.
  By reducing from the \textsc{Independent Set Problem}, the \textsc{Vertex Coloring Problem} and the \textsc{$k$-SAT problem}, we show different NP-hardness results of the \textsc{Partition Coloring Problem} with respect to different constant settings of three parameters: the number of colors, the number of parts of the partition, and the maximum size of each part of the partition. We also design polynomial-time algorithms for the remaining cases. It would be interesting to look into the complexity status of the problem in subgraph classes with different settings on the three parameters.

   \section*{Acknowledgement}
   This work was supported by the National Natural Science Foundation of China,
under grants 61972070 and 61772115.

\end{document}